\documentclass[sigconf,screen]{acmart}

\usepackage{graphicx}
\usepackage{booktabs}
\usepackage{xurl}

\usepackage[linesnumbered,ruled,vlined]{algorithm2e}

\usepackage{listings}
\usepackage{xcolor}
\usepackage{textcomp}

\usepackage{array}
\usepackage{tabularx}

\usepackage{amsfonts}
\usepackage{amsthm}

\usepackage{xspace}
\usepackage{microtype}

\usepackage[capitalize]{cleveref}
\crefname{section}{Sect.}{Sects.}
\Crefname{section}{Section}{Sections}
\crefname{definition}{Def.}{Defs.}
\Crefname{definition}{Definition}{Definitions}
\crefname{algorithm}{Alg.}{Algs.}
\Crefname{algorithm}{Algorithm}{Algorithms}
\crefname{theorem}{Thm.}{Thms.}
\Crefname{theorem}{Theorem}{Theorems}

\newtheoremstyle{compact}%
  {2pt}
  {2pt}
  {\itshape}
  {}
  {\bfseries}
  {.}
  {.5em}
  {}
\theoremstyle{compact}
\newtheorem{definition}{Definition}
\newtheorem{theorem}{Theorem}

\definecolor{verylightgray}{rgb}{.97,.97,.97}
\definecolor{tvmblue}{rgb}{0.15,0.30,0.65}
\definecolor{tvmgreen}{rgb}{0.15,0.55,0.25}
\definecolor{tvmred}{rgb}{0.70,0.15,0.15}
\definecolor{tvmgray}{rgb}{0.50,0.50,0.50}

\lstdefinelanguage{tvmasm}{
  keywords=[1]{PUSHCONT,PUSHINT,PUSHCTR,POPCTR,SAVE,SAVEALT,SAVEBOTH,
    EXECUTE,CALLX,JMPX,CALLXARGS,JMPXARGS,CALLCC,
    RET,RETALT,IF,IFNOT,IFELSE,IFJMP,IFNOTJMP,
    SENDRAWMSG,SENDMSG,ACCEPT,SETGASLIMIT,RAWRESERVE,
    LDMSGADDR,LDU,LDI,LDSLICE,CTOS,ENDS,LDGRAMS,
    DUP,SWAP,DROP,XCHG,ROT,ROLL,BLKSWAP,
    ADD,SUB,MUL,DIV,CMP,EQUAL,LESS,GREATER,
    THROWIF,THROWIFNOT,THROW,
    MYADDR,BALANCE,NOW,BLOCKLT,LTIME,
    DICTGET,DICTIGET,CALLDICT,JMPDICT,CALLREF,JMPREF,
    BLESS,SETCONT,SETRETCTR,SETALTCTR,
    RANDU256,RAND,ADDRAND,SETRAND,RANDOMIZE},
  keywordstyle=[1]\color{tvmblue}\bfseries,
  keywords=[2]{c0,c1,c2,c3,c4,c5,c7,s0,s1,s2,s3,s4},  
  keywordstyle=[2]\color{tvmred}\bfseries,
  identifierstyle=\color{black},
  sensitive=true,
  comment=[l]{;;},
  morecomment=[s]{/*}{*/},
  commentstyle=\color{tvmgray}\itshape,
  stringstyle=\color{tvmred}\ttfamily,
  morestring=[b]",
}

\lstdefinelanguage{func}{
  keywords=[1]{int,cell,slice,builder,cont,tuple,return,var,if,ifnot,
    elseif,else,while,do,until,repeat,forall,
    recv_internal,recv_external,run_ticktock,
    method_id,impure,inline,inline_ref,asm,
    global,const},
  keywordstyle=[1]\color{tvmblue}\bfseries,
  keywords=[2]{begin_parse,end_parse,load_uint,load_int,load_msg_addr,
    load_coins,load_ref,store_uint,store_int,store_slice,store_ref,
    send_raw_message,accept_message,set_gas_limit,raw_reserve,
    get_data,set_data,begin_cell,end_cell,
    throw,throw_if,throw_unless},
  keywordstyle=[2]\color{tvmgreen}\bfseries,
  identifierstyle=\color{black},
  sensitive=true,
  comment=[l]{;;},
  morecomment=[s]{/*}{*/},
  commentstyle=\color{tvmgray}\itshape,
  stringstyle=\color{tvmred}\ttfamily,
  morestring=[b]",
}

\newcommand{\tool}{\textsc{TasmScan}\xspace}
\newcommand{\tsa}{\textsc{TSA}\xspace}
\newcommand{\savelist}{\textit{savelist}\xspace}
\newcommand{\tvm}{\textsc{TVM}\xspace}
\newcommand{\ton}{\textsc{TON}\xspace}
\newcommand{\evm}{\textsc{EVM}\xspace}
\newcommand{\tasir}{\textsc{TASIR}\xspace}
\newcommand{\eg}{\textit{e.g.}\xspace}
\newcommand{\etal}{\textit{et al.}\xspace}

\setcopyright{cc}
\setcctype{by-nc-nd}
\acmDOI{10.1145/3832783.3834346}
\acmYear{2026}
\copyrightyear{2026}
\acmISBN{979-8-4007-2882-2/2026/10}
\acmConference[ASE '26]{Proceedings of the 41st IEEE/ACM International Conference on Automated Software Engineering}{October 12--16, 2026}{Munich, Germany}
\acmBooktitle{Proceedings of the 41st IEEE/ACM International Conference on Automated Software Engineering (ASE '26), October 12--16, 2026, Munich, Germany}
\acmSubmissionID{ase26main-p182-p}
\received{2026-03-26}
\received[accepted]{2026-06-18}

\begin{document}

\title[TasmScan: Continuation-Aware Taint Analysis\ldots]{TasmScan:
Continuation-Aware Taint Analysis for TVM Bytecode with Savelist Abstraction}

\author{Yixuan Liu}
\orcid{0009-0006-2255-7901}
\affiliation{%
  \institution{Nanyang Technological University}
  \city{Singapore}
  \country{Singapore}
}
\email{LIUY0255@e.ntu.edu.sg}

\author{Yin Wu}
\orcid{0009-0001-6583-3703}
\affiliation{%
  \institution{Xi'an Jiaotong University}
  \city{Xi'an}
  \country{China}
}
\email{wuyin@stu.xjtu.edu.cn}

\author{Yi Li}
\correspondingauthor
\orcid{0000-0003-4562-8208}
\affiliation{%
  \institution{Nanyang Technological University}
  \city{Singapore}
  \country{Singapore}
}
\email{yi\_li@ntu.edu.sg}


\begin{abstract}
The Open Network (\ton), with a peak market capitalization exceeding \$20 billion
and over 175 million activated on-chain addresses, relies on the \tvm (\ton
Virtual Machine) to execute smart contracts. \tvm uses
first-class continuations with \savelist{}s to manage control flow and register
state across continuation invocations. Since \savelist-captured registers allow data to flow across continuation
boundaries without passing through the operand stack, bytecode-level analyses
cannot construct complete data flow tracking without explicitly modeling
\savelist semantics.
We present \tool, the first bytecode-level static analysis framework for \tvm
that enables cross-continuation data flow reasoning without requiring source
code. \tool models \savelist semantics via forward register analysis with a
formal over-approximation guarantee for exact-resolved save sites and
locally tracked register definitions, then lifts bytecode into \tasir, a typed
intermediate representation, and performs path-sensitive taint analysis with
context-aware sources to detect defects.
We evaluate \tool on 2{,}921 contracts from the \ton verifier registry and a
labeled benchmark of 208 contracts with human-confirmed ground truth. On the
full corpus, \tool resolves 294{,}546 dynamic continuation targets
with 100\% precision; ablation confirms that \savelist propagation is
essential for resolving indirect register calls that depend on
cross-continuation register passing.
On the benchmark, \tool detects 95.3\% of defects across five classes with
96.8\% precision. A 366-pair stratified sample from the full corpus estimates 85.8\%
overall precision. \tool offers a 17$\times$ median speedup over the
state-of-the-art symbolic-execution baseline, and in the path-analysis
comparison completes 100\% of analyses with zero crashes or timeouts.
\end{abstract}

\begin{CCSXML}
<ccs2012>
  <concept>
    <concept_id>10011007.10011074.10011099.10011692</concept_id>
    <concept_desc>Software and its engineering~Formal software verification</concept_desc>
    <concept_significance>500</concept_significance>
  </concept>
  <concept>
    <concept_id>10002978.10003022.10003023</concept_id>
    <concept_desc>Security and privacy~Software security engineering</concept_desc>
    <concept_significance>500</concept_significance>
  </concept>
</ccs2012>
\end{CCSXML}

\ccsdesc[500]{Software and its engineering~Formal software verification}
\ccsdesc[500]{Security and privacy~Software security engineering}

\keywords{smart contract security, static analysis, abstract interpretation,
TVM, continuations}

\maketitle


\section{Introduction}\label{sec:intro}


Smart contracts deployed on blockchain platforms manage substantial financial
assets, making their security a critical concern~\cite{atzei2017survey,
perez2021smart}. The Open Network (\ton), initially designed by Nikolai
Durov~\cite{durov2017ton} to support Telegram's large user base, has grown into
a major decentralized platform with a peak market capitalization exceeding
\$20 billion and over 175 million activated on-chain
addresses~\cite{tonscan2026stats}. Its DeFi ecosystem has reached a total value
locked (TVL) of up to \$800 million~\cite{defillama2026ton}, hosting thousands
of smart contracts for token transfers, lending protocols, NFT
marketplaces, and governance mechanisms~\cite{tonblockchain2023}. A recent
empirical study found that approximately 94\% of \ton contracts
contain at least one
defect~\cite{song2025tonscanner}, underscoring the need for automated analysis
to safeguard the significant assets under management.


\begin{figure}[t]
\centering
\includegraphics[width=\columnwidth]{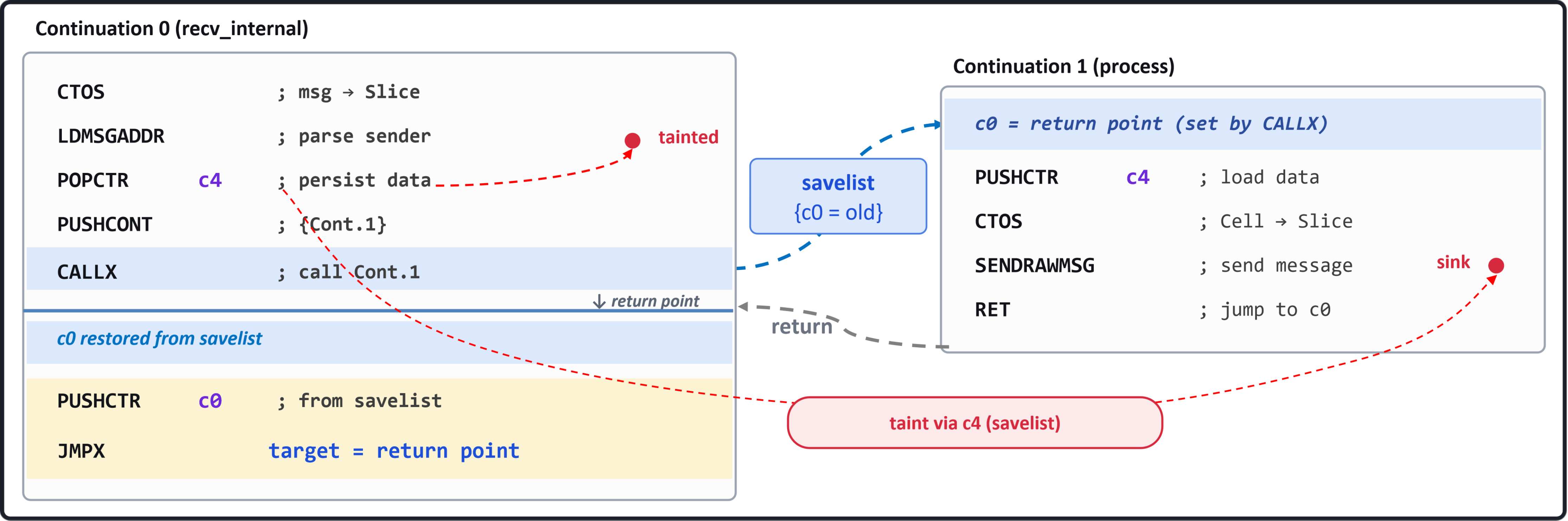}
\caption{Savelist-dependent indirect jump and cross-continuation taint flow.}
\Description{A control-flow sketch showing how a callee stores a continuation in
register c0, returns, and later restores that saved continuation so an indirect
jump target depends on savelist state across continuation boundaries. A red
dashed path shows taint flowing from the sender address read through register c4
via the savelist to a message-send sink in the callee continuation.}
\label{fig:motivating}
\end{figure}

\ton smart contracts are compiled to bytecode and executed on the \tvm.
Unlike the Ethereum Virtual Machine (\evm), which uses address-based jumps,
\tvm employs \emph{first-class continuations} with \savelist-mediated register
save/restore as its sole control-flow primitive~\cite{tvmspec2023}.
\Cref{fig:motivating} illustrates this mechanism: \texttt{CALLX} (blue)
implicitly saves register \texttt{c0} into the callee's \savelist, and a
later \texttt{JMPX} (amber) jumps to the restored value. Without \savelist
modeling, the jump target is unresolved, leaving a gap in the control-flow
graph (CFG). The red dashed path further shows how taint propagates across
continuation boundaries through the same \savelist mechanism.


Analyzing \ton smart contracts poses three challenges.
\emph{First}, source code is often unavailable for deployed contracts, so the
analysis must operate directly on bytecode; TONScanner~\cite{song2025tonscanner}
provides source-level analysis but cannot reach bytecode-level \savelist
behaviors.
\emph{Second}, data flows across continuation boundaries through register
save/restore operations, a pattern invisible to analyses designed for
address-based control flow~\cite{feist2019slither, mueller2018mythril,
tsankov2018securify, bose2022sailfish}.
\emph{Third}, statically abstracting \savelist effects across all execution
paths is difficult; \tsa{}~\cite{esprito2024tsa} faithfully executes \savelist
operations per path via symbolic execution but models each \savelist as a
concrete structure with no join or widening, leaving cross-continuation data
flow unabstracted.


To address these challenges, we propose \tool, which directly analyzes
deployed \tvm bytecode. \tool first resolves continuation targets via stack
simulation to recover the CFG, then applies forward register analysis to
statically reconstruct each continuation's \savelist effects, with a soundness
guarantee scoped to exact-resolved save sites and locally tracked register
definitions. Built on this foundation,
\tool lifts bytecode into a typed IR (\tasir) and performs continuation-aware
taint analysis to detect 5 defect classes. Evaluation on 2{,}921 real-world
contracts shows that \tool is both effective and efficient compared with the
state-of-the-art symbolic-execution baseline.


In summary, this paper makes the following contributions:
\begin{itemize}
  \item \textbf{Formal savelist abstraction.}
    We formulate a set-valued dataflow model for \tvm \savelist semantics
    with a tracked-definition soundness theorem under exact-resolved save
    sites, enabling cross-continuation data flow reasoning at the bytecode
    level while making the proof scope explicit.
    (\cref{sec:savelist})

  \item \textbf{\tasir and continuation-aware taint analysis.}
    We design \tasir, a typed intermediate representation for \tvm bytecode,
    and implement continuation-aware taint analysis on top of it to detect
    5 defect classes in \tvm/\ton smart contracts.
    (\cref{sec:ir,sec:taint,sec:detectors})

  \item \textbf{Comprehensive evaluation.}
    We evaluate \tool on 2{,}921 contracts from the \ton verifier registry.
    \tool resolves 294{,}546 dynamic continuation targets with 100\%
    precision, including 1{,}028 \savelist-dependent targets across
    129 contracts that require cross-continuation register tracking.
    It achieves a 95.3\% detection rate with a 17$\times$ median speedup
    over \tsa{} and 100\% analysis completion with zero crashes.
    (\cref{sec:evaluation})
\end{itemize}


\section{Background}\label{sec:background}

\subsection{The TON Blockchain}\label{sec:bg-ton}

The \ton blockchain~\cite{durov2017ton, tonblockchain2023} is a multi-chain
platform using asynchronous message passing between smart contracts.
Developers write contracts in \textit{FunC} (a C-like language),
\textit{Tact} (a higher-level language), \textit{Tolk} (a TypeScript-like
language), or directly in \textit{Fift} (a low-level stack language); all
compile to \tvm bytecode, serialized in the \emph{Bag of Cells} (BOC) format
and deployed on-chain. Since source code is not always
available for deployed contracts, bytecode-level analysis is essential.

\ton contracts define standard entry points dispatched by a selector value on
the stack. The primary entry point, \texttt{recv\_internal} (selector~0),
handles messages from other contracts. The initial stack provides the
selector (\texttt{s0}), message body (\texttt{s1}), and message
cell (\texttt{s2}) as inputs to the contract logic.

\subsection{TVM and Continuations}\label{sec:bg-cont}

The \evm uses program-counter-driven control transfer
(\texttt{JUMP}\slash\texttt{JUMPI})~\cite{ethereumyellowpaper}.
The \tvm~\cite{tvmspec2023} replaces this model with \emph{first-class
continuations}: each continuation is a composite value containing executable
code and a \savelist, a partial mapping from register indices to values.
This mechanism serves as the sole control-flow primitive in \tvm.

The \tvm is a stack-based virtual machine with 16 control register slots
(\texttt{c0}--\texttt{c15}), of which seven have defined roles, and a
tree-structured data model based on \emph{Cells} (up to 1,023 bits of data
and 4 references to other Cells), \emph{Slices} (read cursors), and
\emph{Builders} (write cursors). Four control registers are central to our analysis:
\texttt{c0} holds the return continuation,
\texttt{c1} the alternative-return continuation,
\texttt{c3} the method dictionary, and
\texttt{c4} stores persistent contract data.
The remaining slots serve roles such as exception handling (\texttt{c2}),
output action accumulation (\texttt{c5}), and blockchain context
(\texttt{c7})~\cite{tvmspec2023}. We write
$\mathit{Reg} = \{c_0, c_1, c_2, c_3, c_4, c_5, c_7\}$ for the set of
active control registers.

When control transfers between continuations, register values can be
preserved through a mechanism called the \savelist:

\begin{definition}[\tvm Continuation]\label{def:continuation}
  An ordinary \tvm continuation is a tuple
  $c = \langle \mathit{code}, \ell \rangle$
  where $\mathit{code}$ is an instruction sequence and
  $\ell : \mathit{Reg} \rightharpoonup \mathit{Value}$ is a partial
  function mapping register indices to values (the \savelist).%
  \footnote{Runtime continuations carry additional fields (stack, codepage,
  argument count)~\cite{tvmspec2023}; the code and \savelist are the fields
  relevant to cross-continuation data flow.}
\end{definition}

The \texttt{SAVE}~$r$ instruction writes the current value of register~$r$ into
the \savelist of the return continuation (\texttt{c0}). When a continuation is
invoked (\eg, via \texttt{EXECUTE}, \texttt{CALLX}, \texttt{JMPX},
or conditional instructions), its \savelist entries are restored: for each
register index $r$ in the domain of $\ell$ (i.e., each register that was
explicitly saved), $r$ is set to $\ell(r)$. This mechanism enables data
to flow across continuation boundaries through register save/restore
operations, a pattern that requires dedicated static analysis support.

\Cref{tab:evm-tvm} summarizes the key architectural differences between
\evm and \tvm that motivate the design of \tool.

\begin{table}[t]
  \caption{Architectural comparison between EVM and TVM.}
  \label{tab:evm-tvm}
  \centering
  \begin{tabularx}{\columnwidth}{@{}p{0.30\columnwidth}XX@{}}
    \toprule
    \textbf{Aspect} & \textbf{EVM} & \textbf{TVM} \\
    \midrule
    Control flow    & JUMP/JUMPDEST   & First-class continuations \\
    State transfer  & None            & Savelist (register snapshot) \\
    Data model      & 256-bit words   & Cell/Slice/Builder tree \\
    Message format  & ABI encoding    & TL-B encoding \\
    Function dispatch & 4-byte selector & Dictionary dispatch \\
    Gas model       & Prepaid         & Credit-based gas \\
    \bottomrule
  \end{tabularx}
\end{table}


\section{Approach}\label{sec:approach}

\begin{figure*}[t]

\centering
\includegraphics[width=\textwidth]{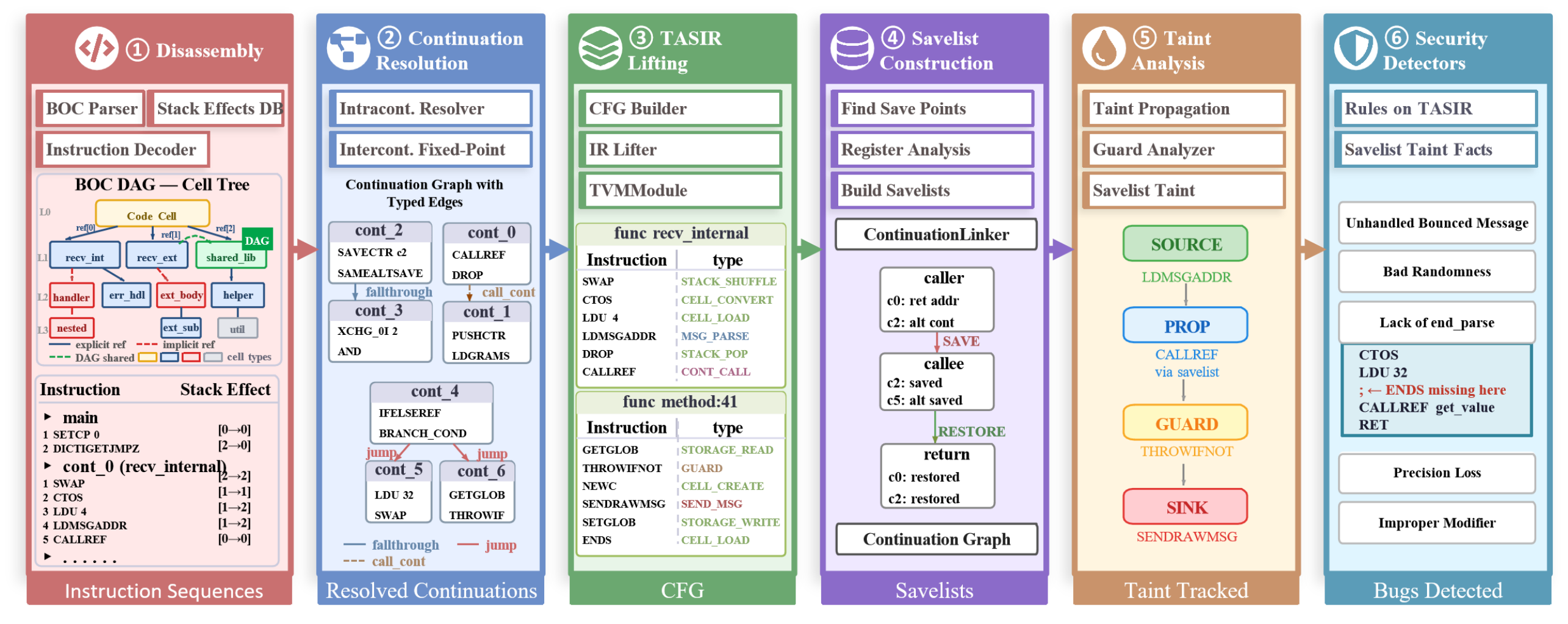}
\caption{Overview of \tool.}
\Description{Pipeline diagram with six stages: bytecode disassembly, continuation
resolution, lifting to TASIR, static savelist modeling, path-sensitive taint
analysis, and security detectors. Data products flow between stages as decoded
instructions, resolved continuation facts, typed IR, savelist summaries, and
taint findings.}
\label{fig:arch}
\end{figure*}

\Cref{fig:arch} shows the \tool pipeline.
BOC disassembly (\cref{sec:disasm}) decodes BOC-serialized bytecode
into typed instructions with stack-effect annotations.
Continuation resolution (\cref{sec:cont-resolve}) performs stack
simulation to resolve dynamic continuation targets and build the CFG.
TASIR lifting (\cref{sec:ir}) translates resolved instructions into
a typed intermediate representation.
Savelist construction (\cref{sec:savelist}) statically models
register save/restore effects across continuation boundaries.
Taint analysis (\cref{sec:taint}) is path sensitive and propagates taint across
continuation boundaries via savelists.
Security detectors (\cref{sec:detectors}) identify defects from
taint findings and structural patterns.

\subsection{BOC Disassembly}\label{sec:disasm}

The disassembly stage converts BOC-serialized \tvm bytecode into typed
instructions annotated with stack effects.

\textit{Instruction database.}
Our disassembler's opcode database covers the \tvm instruction set with encoding,
mnemonic, and interval stack-effect annotations, derived from the official
specification~\cite{tvmspec2023,tonblockchain2023}.

\textit{Variable-length decoding.}
\tvm opcodes are variable-length (8--24~bits). The decoder performs prefix
matching over sorted opcode ranges: it aligns the remaining slice to a 24-bit
window, identifies the next opcode, and extracts its operands.

\textit{All-or-nothing fallback.}
When decoding fails mid-slice (\eg, in data cells), the decoder discards the
partial result and emits the entire slice as raw data, avoiding spurious
instructions. The raw payload remains attached to the BOC cell tree but is not
treated as executable input by CFG construction or later analyses.

\textit{Stack effect annotations.}
Each decoded instruction carries an interval effect
$[\mathit{in}_{\min}, \mathit{in}_{\max}] \to
[\mathit{out}_{\min}, \mathit{out}_{\max}]$, used by continuation
resolution and the detectors. \Cref{alg:disasm} splits the BOC root into the
main code body and an optional method dictionary in \texttt{c3}, then decodes
each entry into a linear instruction sequence.

\begin{algorithm}[t]
\LinesNumbered
\caption{TVM Bytecode Disassembly}\label{alg:disasm}
\KwIn{BOC bytecode $B$}
\KwOut{Method-indexed instruction sequences $\mathcal{I}$}
$(\mathit{code}, \mathit{dict}) \gets \mathsf{SplitCodeAndDict}(\mathsf{ParseBOC}(B).root)$\;
$\mathcal{I} \gets \{0 \mapsto \mathsf{DecodeSlice}(\mathit{code})\}$\;
\If{$\mathit{dict} \neq \bot$}{
  \ForAll{$(\mathit{key}, \mathit{val}) \in \mathsf{ParseDict}(\mathit{dict})$}{
    $\mathcal{I}[\mathit{key}] \gets \mathsf{DecodeSlice}(\mathit{val})$\;
  }
}
\Return{$\mathcal{I}$}\;
\end{algorithm}

\subsection{Continuation Resolution}\label{sec:cont-resolve}

Given the instruction sequence produced by the disassembler, this stage resolves
which continuation each branch, call, or jump instruction targets.
Since \tvm control-transfer instructions consume continuations from the stack,
determining the target requires static reasoning about stack contents. We
perform lightweight stack simulation using a three-category lattice (\cref{def:stack-token}).

\noindent\textit{Example.}
Consider the two control-transfer sites in \cref{fig:motivating}.
At the blue \texttt{CALLX} site, the preceding \texttt{PUSHCONT} pushes
Cont.1. The stack therefore holds the known continuation token
$\mathsf{cont}(\mathit{Cont.1})$, and the resolver yields an \emph{exact}
target.
By contrast, at the \texttt{JMPX} site (amber), the operand comes from
\texttt{PUSHCTR\;c0}, whose value was restored via the \savelist after the
callee returned. The register map yields $\mathsf{unknown}$ for
$c_0$ under a purely local simulation, leaving the target \emph{unresolved}
until \savelist propagation (\cref{sec:savelist}) recovers the binding.
This gap motivates the intercontinuation fixed-point described below.

\subsubsection{State Representation}

\begin{definition}[Stack Token]\label{def:stack-token}

  A token $t$ is one of:
  (1)~$\mathsf{cont}(id)$ --- a known continuation with identifier $id$;
  (2)~$\mathsf{val}(n?)$ --- a value (optionally a known constant $n$); or
  (3)~$\mathsf{unknown}$ --- an unresolvable entry.
\end{definition}

\begin{definition}[Analysis Context]\label{def:abs-ctx}

  Let $\mathit{Tok}$ be the set of stack tokens
  (\cref{def:stack-token}). An analysis stack is a pair
  $\sigma = \langle p, u \rangle$ where $p \in \mathit{Tok}^*$ is a finite
  sequence of tokens (the tracked stack prefix) and $u \in \mathbb{B}$ (Booleans) is the
  \textnormal{\texttt{unknown\_below}} flag,
  indicating whether unmodeled elements may exist below that prefix;
  when $u$ is set, accesses beyond $p$ conservatively yield
  $\mathsf{unknown}$. A register map is
  $\rho : \{c_0,c_1,c_2,c_3\} \to \mathit{Tok}_{\bot}$, where
  $\mathit{Tok}_{\bot}$ extends tokens with $\bot$ (undefined); only
  the control-flow registers \textnormal{\texttt{c0}}--\textnormal{\texttt{c3}}
  are tracked, as they
  determine continuation targets. A register read yields $\rho[r]$ when
  defined and $\mathsf{unknown}$ otherwise. An analysis context
  is $\chi = \langle \sigma, \rho \rangle$.
\end{definition}
\subsubsection{Intracontinuation Analysis}

\Cref{alg:cont-resolve} outlines the intracontinuation resolution procedure.
Given an instruction sequence and an initial analysis context $\chi_0$, it
scans each instruction in order, performing three actions per step:
resolving continuation targets from the current state
(Line~\ref{line:resolve}), updating the analysis context
(Line~\ref{line:step}), and emitting transfer edges for intercontinuation
propagation (Line~\ref{line:xfer}). The procedure produces four outputs:
resolved targets $\mathcal{M}$, exact save-target facts $\mathcal{S}$,
unresolved sites $\mathcal{U}$, and callee transfer facts $\mathcal{X}$.

\noindent\textit{Target resolution.}
At each control-transfer instruction, the resolver reads the analysis context
to determine the jump target
(Lines~\ref{line:resolve}--\ref{line:needs}); control-transfer instructions
that yield no target are recorded as unresolved in $\mathcal{U}$.
The resolution uses an
instruction-specific specification: calls and jumps read the stack top; unary
conditionals read the continuation below the condition; multi-target branches
(\eg, \texttt{IFELSE}) read the required $k$ continuation positions.
Each resolved fact carries a confidence tag
$\kappa \in \{\mathit{exact}, \mathit{heur}\}$: facts derived directly from
the analysis context or from exact method-dictionary lookups are
$\mathit{exact}$; recoveries from the dynamic solver are $\mathit{heur}$.
When the current instruction is a save operation and the register read of
$c_0$ yields a known continuation, an exact save-target fact
$\mathcal{S}[i] = id$ is recorded (Line~\ref{line:saveop}).

\noindent\textit{State transfer and propagation.}
The step function (Line~\ref{line:step}) applies exact rules for pushes,
register reads/writes, and stack permutations. For instructions whose stack
behavior is not modeled exactly, the interval stack-effect annotation
$[\mathit{in}_{\min},\allowbreak \mathit{in}_{\max}] \to
[\mathit{out}_{\min},\allowbreak \mathit{out}_{\max}]$ from \cref{sec:disasm} pops the
known minimum, sets the \texttt{unknown\_below} flag when additional hidden
operands may be consumed, and pushes $\mathsf{unknown}$ outputs.
Finally, typed transfer edges and callee contexts are emitted for
intercontinuation propagation (Line~\ref{line:xfer}).

\begin{algorithm}[t]
\LinesNumbered
\caption{Intracontinuation Resolution}\label{alg:cont-resolve}
\textbf{function} $\mathsf{ResolveLocal}(\mathbf{I}=\langle I_0,\ldots,I_{n-1}\rangle,\; \chi_0)$\;
\KwOut{Control-transfer facts $\mathcal{M}$; exact save-target map $\mathcal{S}$; unresolved set $\mathcal{U}$; transfer facts $\mathcal{X}$}
$\mathcal{M}, \mathcal{S}, \mathcal{U}, \mathcal{X} \gets \emptyset$\;
$\chi \gets \chi_0$\;
\For{$i = 0$ \KwTo $n-1$}{
  $T \gets \mathsf{ResolveTargets}(\chi, I_i)$\nllabel{line:resolve}\;
  \lIf{$\mathsf{NeedsTarget}(I_i) \land T = \emptyset$}{$\mathcal{U} \gets \mathcal{U} \cup \{i\}$}\nllabel{line:needs}
  $\mathcal{M}[i] \mathrel{{+}{=}} T$\;
  \lIf{$\mathsf{SaveOp}(I_i) \land \mathsf{ReadReg}(\chi.\rho, c_0)=\mathsf{cont}(id)$}{$\mathcal{S}[i] \gets id$}\nllabel{line:saveop}
  $\chi' \gets \mathsf{Step}(\chi, I_i)$\nllabel{line:step}\;
  $\mathcal{X} \mathrel{{+}{=}} \mathsf{TransferEdges}(\chi, \chi', I_i, T)$\nllabel{line:xfer}\;
  $\chi \gets \chi'$\;
}
\Return{$\mathcal{M}, \mathcal{S}, \mathcal{U}, \mathcal{X}$}
\end{algorithm}

\textit{Known-Prefix Fallback.}
When a control-transfer instruction cannot resolve its continuation via
positional access (because the target depth exceeds the known prefix), a
fallback scans the entire known prefix for $\mathsf{cont}$ tokens. If exactly
$k$ continuations are found and the instruction requires exactly $k$ targets,
they are returned as the resolution. An exact-count guard prevents false
matches when extra continuations are present.

\subsubsection{Intercontinuation Fixed-Point}

The intracontinuation resolution of \cref{alg:cont-resolve} analyzes one
continuation body in isolation, so callees start with no knowledge of the
caller's stack or registers. To recover caller-derived state, we introduce an
intercontinuation fixed-point worklist (\cref{alg:interproc}) that repeatedly
invokes \cref{alg:cont-resolve} while propagating entry contexts across
continuation boundaries until convergence.

\begin{algorithm}[t]
\LinesNumbered
\caption{Intercontinuation Fixed-Point Resolution}\label{alg:interproc}
\KwIn{Continuation contexts $C$; entry points $E$}
\KwOut{Control-transfer facts $\mathcal{M}$; exact save-target map $\mathcal{S}$; unresolved set $\mathcal{U}$}
$\mathit{ctx}[c] \gets \bot$ for all $c \in C$\tcp*{Analysis contexts}
$\mathit{out}[c] \gets (\emptyset,\emptyset,\emptyset,\emptyset)$ for all $c \in C$\;
\ForAll{$e \in E$}{
  $\mathit{ctx}[e] \gets \mathsf{InitCtx}(e)$\nllabel{line:initctx}\;
}
$\mathcal{W} \gets E$\;
\While{$\mathcal{W} \neq \emptyset$}{
  $c \gets \mathcal{W}.\mathrm{dequeue}()$\;
  $(\mathcal{M}_c, \mathcal{S}_c, \mathcal{U}_c, \mathcal{X}_c)
    \gets \mathsf{ResolveLocal}(\mathsf{Body}(c), \mathit{ctx}[c])$\nllabel{line:resolvelocal}\;
  $\mathit{out}[c] \gets (\mathcal{M}_c, \mathcal{S}_c, \mathcal{U}_c, \mathcal{X}_c)$\;
  \ForAll{$(c', \chi_t) \in \mathcal{X}_c$}{
    $\mathit{ctx}' \gets \mathit{ctx}[c'] \sqcup \chi_t$\nllabel{line:join}\;
    \If{$\mathit{ctx}' \neq \mathit{ctx}[c']$}{
      $\mathit{ctx}[c'] \gets \mathit{ctx}'$\;
      $\mathcal{W}.\mathrm{enqueue}(c')$\nllabel{line:enqueue}\;
    }
  }
}
$(\mathcal{M}, \mathcal{S}, \mathcal{U}) \gets \mathsf{Aggregate}(\mathit{out})$\nllabel{line:aggregate}\;
$(\mathcal{M}, \mathcal{S}, \mathcal{U}) \gets
  \mathsf{PruneUnreachable}(\mathcal{M}, \mathcal{S}, \mathcal{U}, E)$\nllabel{line:prune}\;
\Return{$\mathcal{M}, \mathcal{S}, \mathcal{U}$}
\end{algorithm}

\noindent Each public entry point is seeded with an initial analysis context
(Line~\ref{line:initctx}): the standard \ton stack shape for the main entry,
or the conservatively widened shape observed at method-dictionary dispatch
sites. The main loop dequeues each continuation and invokes the intracontinuation
resolution of \cref{alg:cont-resolve} (Line~\ref{line:resolvelocal}).
The propagated callee context is merged into the target's entry state via
the join operator $\sqcup$
(Lines~\ref{line:join}--\ref{line:enqueue}). The join widens stack prefixes
element-wise (mismatched positions become $\mathsf{unknown}$) and applies
\emph{unanimity merging} to registers: a register value is retained only when
all callers agree, and discarded otherwise. This correctly propagates globally
unique bindings such as the method dictionary $c_3$ while preventing false
propagation of caller-specific values such as return addresses in $c_0$.
If the merged entry state changes, the target is re-enqueued, driving
iteration to a fixed point.
Termination follows because the tracked stack prefix, token categories,
register domain, and continuation set are finite; $\sqcup$ is monotone and can
only replace information with $\mathsf{unknown}$ or discard a non-unanimous
register value. Thus each entry context changes finitely many times.
After convergence, the final intracontinuation facts are aggregated
(Line~\ref{line:aggregate}), and a depth-first reachability traversal from
entry points prunes unresolved edges originating from unreachable code
(Line~\ref{line:prune}), as these represent dead continuations extracted from
data cell references (\eg, \texttt{PUSHREFSLICE}) rather than executable code.
\texttt{SAVE}-family instructions whose pre-save $c_0$ binding is not exact
are omitted from $\mathcal{S}$ and hence from the savelist construction of
\cref{sec:savelist}.

\textit{Dynamic Target Solver.}
When continuation resolution leaves a target unresolved, a fallback solver tries
four common patterns before emitting $\mathsf{dynamic}$: direct dictionary
dispatch from a preceding \texttt{PUSHINT}, exact \texttt{PUSHCTR\,c3; EXECUTE}
recovery, a weaker method-ID hint from \texttt{PUSHINT}, and
\texttt{BLESS; EXECUTE}. Only the exact \texttt{c3}-based recovery contributes
to save-target attribution and \cref{thm:savelist-sound}; heuristic recoveries
are used only to improve later CFG coverage. Each call/jump site is then
annotated with the target's \savelist summary rather than recursively inlining
callees.

\subsection{TASIR: Intermediate Representation}\label{sec:ir}

Raw \tvm instructions expose numerous encoding variants with identical
semantics. Writing analyses directly over these encodings would duplicate
transfer functions and detector rules. Our lifter therefore normalizes them
into semantic node kinds in \tasir, a typed intermediate representation, while
retaining the original \tvm instruction in each node for traceability. Using
the outputs of \cref{alg:interproc}, $\mathcal{M}$ drives the construction of a
typed CFG that can be reused by downstream analyses, while $\mathcal{S}$ seeds
the savelist modeling described in \cref{sec:savelist}.

Each \tasir instruction carries the original opcode and a \emph{kind} that
classifies its behavior, such as \textsf{CONT\_CALL}, \textsf{STACK\_PUSH},
or \textsf{CELL\_LOAD}. It also carries the interval stack effect from
\cref{sec:disasm} and optional security labels. Examples include
\textsf{MESSAGE\_SEND} and \textsf{AUTH\_CHECK}.
The kind field drives transfer-function dispatch in the taint analysis
(\cref{sec:taint}), and labels identify sinks and sources.
Because each node retains its original instruction and exposes an ordinary
typed CFG, \tasir can also support analyses beyond the detectors in this paper.

\begin{definition}[\tasir Module]\label{def:tasir-module}

  A \tasir module is a tuple
  $\Pi = \langle F, D, \mathit{Edges} \rangle$ where $F$ is the set of
  per-continuation functions, $D$ maps continuation identifiers to
  descriptors, and $\mathit{Edges}$ is the set of typed cross-function edges
  (call, jump, return, exception). Each continuation descriptor
  $d \in D$ records the continuation's identifier, entry block, basic blocks,
  and a \savelist summary~$\hat{\ell}$ populated by the savelist
  construction of \cref{sec:savelist}.
\end{definition}

The \emph{CFG Builder} consumes $\mathcal{M}$ and constructs a CFG with typed edges (fallthrough,
branch, call, call\_return); call-return edges are emitted only when the
$c_0$ binding is known exactly, as the transfer facts already carry these
bindings. \tasir organizes the resulting basic blocks into per-continuation
functions and attaches each function to its descriptor in $D$.
The \emph{Stack Analyzer} then computes per-block stack height bounds, and the
\emph{Event Extractor} classifies security-relevant effects (\eg, message
sends, gas commitment, persistent storage writes) for the detectors in
\cref{sec:detectors}.
This hierarchy makes the relevant objects
explicit: the savelist construction (\cref{sec:savelist}) iterates over
functions $f \in F$ for forward analysis and over continuation descriptors
$d \in D$ when attaching savelist summaries and propagating them to call
sites.

\subsection{Static Savelist Modeling}\label{sec:savelist}

Continuation resolution (\cref{sec:cont-resolve}) determines \emph{where}
control flows, while savelist construction determines \emph{what data} crosses
those continuation boundaries. We keep them separate because savelist analysis
depends on resolved edges but uses a stronger register-domain abstraction.

\subsubsection{Formal Definitions}

\begin{definition}[Tracked Value]\label{def:absval}

  A tracked value is a tuple
  \[
    \hat{v} = \langle \mathit{src} : \mathit{Source},\;
    \mathit{def} : \mathbb{N} \rangle
  \]
  where $\mathit{src}$ is the provenance
  (\eg, \textsf{STORAGE}, \textsf{MSG\_SENDER}) and
  $\mathit{def}$ is the definition site (instruction index) of a
  \emph{locally tracked} register write within the currently analyzed
  continuation body.
\end{definition}

\begin{definition}[Register State]\label{def:abs-reg-state}

  Let $\hat{V}$ denote the set of all tracked values
  (\cref{def:absval}). A register state is a total function
  $\hat{\rho} : \mathit{Reg} \to \mathcal{P}(\hat{V})$,
  where $\mathcal{P}$ denotes the power set.
  For a register $r$, the set $\hat{\rho}(r)$ contains the locally tracked
  definition sites that may reach $r$ at the current program point.
  $\hat{\rho}(r)=\emptyset$ means that the current value of $r$ is either
  not locally defined on the explored path or has been invalidated by an
  analysis barrier (\eg, a dynamic-register write).
\end{definition}

\begin{definition}[Savelist Summary]\label{def:abs-savelist}

  A \savelist summary is a pair
  $\hat{\ell} = \langle \mathit{saved}, \mathit{defs} \rangle$
  where $\mathit{saved} \subseteq \mathit{Reg}$ records which registers may be
  explicitly saved for the continuation, and
  $\mathit{defs} : \mathit{Reg} \to \mathcal{P}(\hat{V})$ maps each register
  to the set of locally tracked definition sites that may flow into that
  saved entry. For a save of register $r$ with reaching-definition set
  $\hat{R}_r$, the update is:
  $\mathit{saved}' = \mathit{saved} \cup \{r\}$ and
  $\mathit{defs}'(r) = \mathit{defs}(r) \cup \hat{R}_r$.
\end{definition}

\subsubsection{Construction Algorithm}

\Cref{alg:savelist} formalizes the construction in three phases, operating
over the tracked values (\cref{def:absval}), register states
(\cref{def:abs-reg-state}), and savelist summaries
(\cref{def:abs-savelist}).
The first phase (Lines~\ref{line:sp}--\ref{line:sp-end}) collects all
save-instruction sites into $\mathit{SP}$, where $I.idx$ denotes the
position index of instruction $I$: for each save instruction with statically
known saved registers, the saved-register set is extracted
and the exact save-target map $\mathcal{S}$ supplies the continuation bound
to $c_0$ at that site. The second phase (Line~\ref{line:fwd}) computes
$\Phi$, the least fixed point of reaching definitions over the
register-state lattice of \cref{def:abs-reg-state}: modeled register writes
attach provenance, non-write instructions propagate states conservatively,
and CFG joins use pointwise set union.
Lines~\ref{line:build}--\ref{line:build-end} build each continuation's
\savelist summary by recording both the registers that may be saved and the
tracked definitions that may populate those saved entries.

\begin{algorithm}[t]
\LinesNumbered
\caption{Savelist Construction}\label{alg:savelist}
\KwIn{\tasir module $\Pi=\langle F,D,\mathit{Edges}\rangle$; exact save-target map $\mathcal{S}$}
\KwOut{Enhanced module $\Pi'$ with populated savelists}

$\mathit{SP} \gets \emptyset$\nllabel{line:sp}\tcp*{Save points}
\ForAll{instruction $I$ in functions of $F$}{
  \If{$\mathsf{SaveOp}(I) \land I.idx \in \mathrm{dom}(\mathcal{S})$}{
    $R_s \gets \mathsf{SavedRegs}(I)$\;
    $t \gets \mathcal{S}[I.idx]$\;
    $\mathit{SP}[t] \mathrel{{+}{=}} \{(I.idx, R_s)\}$\;
  }
}\nllabel{line:sp-end}

$\Phi \gets \mathsf{ReachDefs}(\Pi)$\nllabel{line:fwd}\tcp*{Reaching-definition fixed point}

\tcp{Savelist build}\nllabel{line:build}
\ForAll{continuation descriptor $d \in D$}{
  $d.\hat{\ell}.saved \gets \emptyset$\;
  \ForAll{$r \in \mathit{Reg}$}{
    $d.\hat{\ell}.defs[r] \gets \emptyset$\;
  }
  \ForAll{$(i_s, R_s) \in \mathit{SP}[d.id]$}{
    \ForAll{$r \in R_s$}{
      $d.\hat{\ell}.saved \gets
        d.\hat{\ell}.saved \cup \{r\}$\;
      $d.\hat{\ell}.defs[r] \gets
        d.\hat{\ell}.defs[r] \cup \Phi[i_s][r]$\;
    }
  }
}\nllabel{line:build-end}

\Return{$\Pi'$}
\end{algorithm}

\noindent\textit{Example (\cref{fig:motivating}).}
\Cref{tab:savelist-example} traces \cref{alg:savelist} on the motivating
example: \texttt{CALLX} in Cont.0 triggers a save of \texttt{c0} into
Cont.1's \savelist (Phase~1), reaching-definition analysis identifies
the saved value as the return-point continuation (Phase~2), and the
\savelist summary is built accordingly (Phase~3).
With this summary, the previously unresolved \texttt{PUSHCTR\;c0; JMPX}
(amber in \cref{fig:motivating}) can be resolved: the target is the
return-point continuation, recovered via the constructed savelist.
The red dashed path in \cref{fig:motivating} further shows how taint
from \texttt{LDMSGADDR} propagates through \texttt{c4} via this savelist
mechanism to reach \texttt{SENDRAWMSG} in Cont.1.

\begin{table}[t]
  \caption{Savelist construction walkthrough for \cref{fig:motivating}.}
  \label{tab:savelist-example}
  \centering
  \begin{tabularx}{\columnwidth}{@{}cllX@{}}
    \toprule
    \textbf{Ph.} & \textbf{Action} & \textbf{Input} & \textbf{Output} \\
    \midrule
    1 & Collect save & \texttt{CALLX} at $i_s$ & $\mathit{SP}[\text{C1}]\!=\!\{(i_s,\{c_0\})\}$ \\
    2 & ReachDefs & $\Phi[i_s][c_0]$ & $\{\text{ret-point}\}$ \\
    3 & Build $\hat{\ell}$ & C1 descriptor & $\mathit{saved}\!=\!\{c_0\}$, $\mathit{defs}[c_0]\!=\!\{\text{ret-pt}\}$ \\
    \bottomrule
  \end{tabularx}
\end{table}

\subsubsection{Soundness Scope}

For a concrete value $v$, let $\mathrm{def}(v)$ denote the instruction index
that produced $v$ at runtime, and $\mathrm{dom}(f)$ the domain of a partial
function $f$. For a tracked value $\hat{v}$, let
$\gamma(\hat{v}) = \{ v \in \mathit{Value} \mid
\mathrm{def}(v) = \hat{v}.def \}$; the $\mathit{src}$ tag refines precision
but does not participate in the soundness argument. We extend $\gamma$
pointwise to sets of tracked values.

\begin{theorem}[Tracked-Definition Soundness of Savelist Construction]\label{thm:savelist-sound}

  Let $\hat{\ell}_c = \langle \mathit{saved}_c, \mathit{defs}_c \rangle$ be
  the \savelist summary computed by \cref{alg:savelist} for continuation
  descriptor $c$ (notation follows \cref{def:abs-savelist}).
  Consider any concrete execution trace $\pi$ and any concrete continuation
  instance $\kappa$ of code $c$ whose \savelist is updated in $\pi$ by a
  \textnormal{\texttt{SAVE}}-family instruction whose target continuation is
  \emph{exact-resolved} in the save-target map $\mathcal{S}$ returned by
  \cref{alg:interproc} and whose saved-register set is
  statically known. Let $\ell_\kappa^{\pi}$ be the
  concrete \savelist carried by that instance after those updates.
  Then:
  \begin{enumerate}
    \item $\mathrm{dom}(\ell_\kappa^{\pi}) \subseteq
      \mathit{saved}_c$ \quad (saved-register coverage), and
    \item $\forall\, r \in \mathrm{dom}(\ell_\kappa^{\pi})$:
      if the concrete value $\ell_\kappa^{\pi}(r)$ is \emph{locally tracked}
      at the corresponding save site, then
      $\ell_\kappa^{\pi}(r) \in \gamma\bigl(\mathit{defs}_c(r)\bigr)$
      \hfill(tracked-value over-approximation).
  \end{enumerate}
  A value is locally tracked when its last reaching write before the save site
  is a modeled $\mathsf{RegWrite}$ in the analyzed function and no subsequent
  analysis barrier invalidates that register. Save sites with heuristic or
  unresolved continuation targets, or with dynamic saved-register operands, are
  conservatively excluded from the precise value-attribution part of the
  theorem.
\end{theorem}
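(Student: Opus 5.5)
The proof is an induction on the length of the trace $\pi$, restricted to the in-scope save events that modify the savelist of $\kappa$. Let $e_1,\dots,e_m$ be the \texttt{SAVE}-family instructions in $\pi$ that update $\ell_\kappa$. By hypothesis each $e_j$ executes at an instruction index $i_j$ with $i_j\in\mathrm{dom}(\mathcal{S})$, $\mathcal{S}[i_j]=c$, and a statically known register set $R_j=\mathsf{SavedRegs}(e_j)$. The invariant to maintain is that after the first $j$ events, $\mathrm{dom}(\ell_\kappa)\subseteq \mathit{saved}_c$, and every locally tracked entry lies in $\gamma(\mathit{defs}_c(r))$. The base case is the freshly created instance, whose savelist is empty (or contains only entries outside the scope of the statement). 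The base case therefore holds trivially, since $\mathit{saved}_c$ and $\mathit{defs}_c$ are initialized to $\emptyset$ and only grow.

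The first step links concrete save events to the entries collected by Phase 1 of \cref{alg:savelist}. I would argue that because $\mathcal{S}[i_j]=c$ is an \emph{exact} fact, the concrete continuation in $c_0$ at $e_j$ is an instance of code $c$. To discharge this, I would invoke a soundness property of the context analysis of \cref{alg:cont-resolve,alg:interproc}: exact $\mathsf{cont}(id)$ tokens in $\rho[c_0]$ denote the true runtime binding. This follows from the exact step rules, together with unanimity merging, which only keeps register values on which all predecessors agree. I would state this as a lemma and prove it by induction over the worklist iterations. It then follows that $(i_j,R_j)\in \mathit{SP}[c]$. Coverage (1) is immediate from this: each $r\in R_j$ is added to $\mathit{saved}_c$ by Lines~\ref{line:build}--\ref{line:build-end}. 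Concrete \texttt{SAVE} semantics only adds keys to $\ell_\kappa$, and by the definition of the theorem's scope no out-of-scope update touches $\kappa$, so the domain inclusion is preserved.

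For part (2), fix $r\in R_j$ whose saved value $v=\ell_\kappa(r)$ is locally tracked at $i_j$. By definition, its last reaching write is a modeled $\mathsf{RegWrite}$ at some index $d$ in the same function, so $\mathrm{def}(v)=d$, and no barrier intervenes. The key claim is $\langle\cdot,d\rangle\in\Phi[i_j][r]$. This is the standard soundness of reaching definitions computed as a least fixed point. I would consider the concrete intraprocedural path from the write at $d$ to $i_j$. Along that path, the gen rule adds $d$ at the write. Non-write instructions propagate the set unchanged, and joins take unions, so the set never loses $d$. No kill of $r$ occurs, because $d$ is the last write, and no barrier occurs, by the tracking hypothesis. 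By monotonicity of the fixed point, $d$ survives to $i_j$. Line~\ref{line:build-end} then gives $\Phi[i_j][r]\subseteq\mathit{defs}_c(r)$, so $v\in\gamma(\mathit{defs}_c(r))$. A later save of the same $r$ overwrites the entry, and the new value is handled by the same argument at that later event, so the invariant persists. Untracked entries impose no obligation.

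The main obstacle I expect is the first step. It requires establishing that exact tokens in the continuation-resolution domain are genuinely sound, despite the known-prefix fallback, the \texttt{unknown\_below} widening, and the heuristic solver. I would handle this by showing that only the exact step rules and exact $c_3$ dictionary lookups ever produce $\mathsf{cont}$ tokens in $\rho[c_0]$ at save sites, since heuristic facts are tagged and excluded from $\mathcal{S}$. I would then show that the join never fabricates a $\mathsf{cont}$ token. A secondary subtlety is that the concrete path from $d$ to $i_j$ must follow CFG edges present in the analyzed function. To cover this, I would assume intraprocedural CFG completeness for the segment between the write and the save, which holds because both lie in the same function body and any unresolved edge acts as a barrier.
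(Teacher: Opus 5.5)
Your core argument is the paper's. Phase~1 of \cref{alg:savelist} places $(i_j,R_j)$ in $\mathit{SP}[c]$. Phase~3 unions $R_j$ into $\mathit{saved}_c$ and $\Phi[i_j][r]$ into $\mathit{defs}_c(r)$. Reaching-definition soundness then puts the last modeled write $d$ into $\Phi[i_j][r]$. Your induction over the in-scope save events refines the paper's per-save-site sketch. So does your edge-by-edge path argument, which is what actually yields the containment; leastness of the fixed point is irrelevant to soundness.

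The problem is the step you call the main obstacle. It is unnecessary. You already take $\mathcal{S}[i_j]=c$ as a hypothesis, and the $e_j$ are by definition the saves that update $\kappa$. So the continuation whose \savelist $e_j$ updates is $\kappa$ itself, and $(i_j,R_j)\in\mathit{SP}[c]$ follows from Phase~1 directly. The paper does the same: it says the theorem \emph{relies on} \cref{alg:interproc} for the exact fact $\mathcal{S}[i_s]=c$, rather than proving it.

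As planned, the lemma would also fail. Unanimity merging ranges only over the transfer edges the resolver discovered, not over all runtime callers. Suppose a continuation has one resolved caller and is also entered from one of the 5{,}952 unresolved indirect calls. Its entry context keeps that caller's return continuation as an exact $\mathsf{cont}$ token in $\rho[c_0]$. A \texttt{SAVE} inside it is then recorded in $\mathcal{S}$ against the wrong continuation whenever it runs via the unresolved edge. Showing that the join never fabricates a $\mathsf{cont}$ token does not help, because the failure comes from missing contributions, not fabricated ones. An unresolved edge also cannot act as a barrier on the callee side, since the callee is unknown. Drop the lemma and treat correctness of the exact save target as part of the hypothesis, as the paper does.

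Two minor points. First, \tvm's \texttt{SAVE} does not overwrite an entry already present in the \savelist; your invariant survives either semantics. Second, your base case is trivial only for an empty initial \savelist. Pre-existing out-of-scope entries would violate Property~1 unless they are excluded from $\mathrm{dom}(\ell_\kappa^\pi)$, which is how the paper's sketch implicitly reads it.
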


\noindent\emph{Composition with continuation resolution.}
The theorem relies on \cref{alg:cont-resolve} for the exact save-target fact
$\mathcal{S}[i_s] = c$ and the statically known saved-register set at $i_s$.
Heuristic targets, dynamic targets, and dynamic saved-register operands are
excluded from the precise value-attribution claim and remain only in the
implementation's conservative fallback.

\noindent\textit{Proof sketch.}
Let $L$ be the register-state lattice of
\cref{def:abs-reg-state}, ordered pointwise by subset. Because the module and
$\hat{V}$ are finite, $L$ is finite. The \texttt{ReachDefs} transfer is
monotone: modeled writes install singleton definition sets, the non-write
transfer preserves or clears information conservatively, and joins use
pointwise union. Thus \texttt{ReachDefs} computes the least fixed point for
locally tracked register writes.

For any reachable save site $i_s$ and register $r$, $\Phi[i_s][r]$ contains
every locally tracked definition that may reach $r$ immediately before the save.
For Property~1: by the theorem's precondition the saved-register set is
statically known, so the concrete registers saved at runtime equal the register
set encoded by the save instruction at $i_s$. Lines~\ref{line:sp}--\ref{line:sp-end} record
exactly these registers into $\mathit{SP}$, and
Lines~\ref{line:build}--\ref{line:build-end} union them into
$\mathit{saved}_c$; hence
$\mathrm{dom}(\ell_\kappa^{\pi}) \subseteq \mathit{saved}_c$.
For Property~2: when the concrete value of $r$ is locally tracked (i.e., its
last write is a modeled $\mathsf{RegWrite}$ not invalidated by an analysis
barrier), $\Phi[i_s][r]$ includes that definition by the least fixed-point
guarantee. Lines~\ref{line:build}--\ref{line:build-end} union the
reaching-definition sets from all save sites targeting $c$ into
$\mathit{defs}_c(r)$, so the concrete value is in
$\gamma(\mathit{defs}_c(r))$.

\subsection{Path-Sensitive Taint Analysis}\label{sec:taint}

With the \tasir module and populated savelists, this stage tracks how
attacker-controlled data propagates through the program, including across
continuation boundaries via savelist-restored registers.

\subsubsection{Taint State}\label{sec:taint-domain}

\begin{definition}[Taint Lattice]\label{def:taint-lattice}

  The taint lattice is:
  \[
    \hat{d} = \langle \mathit{src} : \mathit{Source},\;
    \mathit{tainted} : \mathbb{B},\;
    \mathit{checked} : \mathbb{B},\;
    \mathit{origins} : \mathcal{P}(\mathbb{N}) \rangle
  \]
  where $\mathit{origins}$ tracks the definition sites from which
  taint propagated, and $\mathit{Source}$ is one of:
  \textsf{MSG\_SENDER}, \textsf{MSG\_BODY}, \textsf{MSG\_VALUE},
  \textsf{CONSTANT}, \textsf{COMPUTATION}, \textsf{STORAGE}, or
  \textsf{UNKNOWN}.
\end{definition}

Source tags form a flat order (all tags are incomparable; their join is \textsf{UNKNOWN}). Taint values are
ordered componentwise: source tags use that flat order, $\mathit{tainted}$ uses
the usual Boolean order, $\mathit{checked}$ uses the reverse Boolean order (so
unchecked is higher), and $\mathit{origins}$ uses subset. Merge points keep the
source tag when both sides agree and use \textsf{UNKNOWN} otherwise; they OR
the taint bit, AND the checked bit, and union the origin sets.

The analysis state $\hat{S}$ stores four components: a bounded stack of up to
$K_s$ taint values, an $\mathit{overflow}$ bit, a register map from
$\mathit{Reg}$ to $\hat{d}_\bot$, and a partial guard map from definition sites
to guard instructions. Only the top $K_s$ stack entries are tracked
explicitly; deeper pushes discard the deepest tracked entry and set
$\mathit{overflow} = \mathit{true}$.
$K_s$ is a fixed implementation parameter.

\subsubsection{Transfer Functions}\label{sec:transfer}

We define four categories of taint transfer rules (\cref{tab:transfer}).

\begin{table}[t]
  \caption{Taint transfer function categories.}
  \label{tab:transfer}
  \centering
  \begin{tabularx}{\columnwidth}{@{}p{0.22\columnwidth}p{0.25\columnwidth}X@{}}
    \toprule
    \textbf{Category} & \textbf{Kind} & \textbf{Rule} \\
    \midrule
    Uncond.\ source
      & Msg-field read
      & $\hat{d}_{out}.tainted \gets \mathit{true}$ \\[2pt]
    Cond.\ source
      & Generic load
      & Taint only if input is msg-derived \\[2pt]
    Propagation
      & Data transform
      & $\hat{d}_{out} \gets \hat{d}_{in}$ \\[2pt]
    Sink
      & Sensitive op
      & Report if tainted $\land$ unchecked \\
    \bottomrule
  \end{tabularx}
\end{table}

\emph{Unconditional sources} always produce tainted output because they read
attacker-controlled message fields (sender address, value, body).
\emph{Conditional sources} taint their outputs only when the input originates
from an incoming message, not from persistent storage (\texttt{c4}); this
distinction prevents over-tainting of contract-internal data.
\emph{Propagation} instructions inherit the input taint unchanged; for
multi-output instructions, each output applies its own inheritance rule.
\emph{Sinks} do not transform taint but trigger a finding when an operand is
tainted and unchecked.

\textit{Guard analysis.}\label{sec:guards}
When a guard instruction (\texttt{IF}, \texttt{THROWIF}, etc.)
is encountered and its condition depends on a tainted value,
the corresponding $\mathit{checked}$ bit is set, suppressing
further reports for that value at downstream sinks.

\subsubsection{Cross-Continuation Taint via Savelist}\label{sec:cross-taint}

Without savelist-aware propagation, taint information is lost at continuation
boundaries (\cref{fig:motivating}), causing the analysis to miss
vulnerabilities that involve data flowing through saved registers.
To bridge this gap, at each continuation call/jump instruction $I$,
\tool retrieves the precomputed savelist summary (\cref{sec:savelist})
$\hat{\ell} = \langle \mathit{saved}, \mathit{defs} \rangle$. For each saved register
$r \in \mathit{saved}$ and each locally tracked definition $\hat{v} \in \mathit{defs}(r)$,
if $\hat{v}$ is tainted in the current state, that taint is propagated from
definition site $\hat{v}.def$ to the call/jump site $I.idx$:

\begin{equation}\label{eq:cross-taint}
  \begin{aligned}
    &\forall r \in \mathit{saved},\; \forall \hat{v} \in \mathit{defs}(r):\;
      \mathit{tainted}(\hat{v}.def) \\
    &\implies \mathit{propagate}(\hat{v}.def, I.idx)
  \end{aligned}
\end{equation}
\noindent Here $\mathit{tainted}(i)$ is shorthand for ``the current taint
state contains a tainted value whose origin/definition site is $i$,'' and
$\mathit{propagate}(i,j)$ denotes the cross-continuation taint fact forwarded
from definition site $i$ to call/jump site $j$.

Upon entering the target continuation, the \savelist entries are restored into
the register state. The restore-side transfer function updates the taint
analysis state accordingly:
\begingroup
\begin{equation}\label{eq:restore-taint}
  \forall r \in \mathit{saved}:\;
  \hat{S}'.regs[r] \gets
  \mathsf{RestoreSet}(\hat{S}, r, \mathit{defs}(r))
\end{equation}
\endgroup
\noindent where the restore helper joins the taint states for all tracked
definitions in $\mathit{defs}(r)$; when no tracked definitions are available
for $r$, it materializes a placeholder for a saved-but-unattributed register.
Thus \cref{eq:cross-taint,eq:restore-taint} form the cross-continuation bridge
$\mathsf{XContTaint}$ used by \cref{alg:worklist}.

\subsubsection{Worklist Algorithm}\label{sec:worklist}

The path-sensitive analysis is driven by the priority worklist of
\cref{alg:worklist}. Public entries are seeded with the standard \ton entry
stack from \cref{sec:bg-ton}: selector as an untainted constant,
message body/message cell/message value as message-derived sources, contract
balance as an untainted environment value, and all control registers at~$\bot$.

\begin{algorithm}[t]
\LinesNumbered
\caption{Path-Sensitive Taint Analysis}\label{alg:worklist}
\KwIn{\tasir module $\Pi'$ with savelists; initial entry pairs
  $\mathcal{E}_0 = \{(B, \hat{S})\}$ where $B$ is a basic block and $\hat{S}$ its taint state}
\KwOut{Taint results $\mathcal{T}$}
$\mathcal{T}, \mathcal{V} \gets \emptyset$\tcp*{Findings; visited (block, fingerprint) pairs}
$\mathcal{W} \gets \mathsf{PriorityQueue}(\mathcal{E}_0)$\tcp*{Seed with entry blocks}
\While{$\mathcal{W} \neq \emptyset$}{
  $(B, \hat{S}) \gets \mathcal{W}.\mathrm{dequeue}()$\;
  $fp \gets \mathsf{Fingerprint}_{K_s}(\hat{S})$\nllabel{line:fp}\;
  \lIf{$(B.\mathrm{id}, fp) \in \mathcal{V}$}{\textbf{continue}}\nllabel{line:visited}
  $\mathcal{V} \gets \mathcal{V} \cup \{(B.\mathrm{id}, fp)\}$\;
  \ForAll{instruction $I$ in $B$}{
    $\hat{S} \gets \mathsf{TaintStep}(I, \hat{S})$\nllabel{line:transfer}\tcp*{Transfer function}
    \If{$I.kind \in \{\textsf{CONT\_CALL}, \textsf{CONT\_JUMP}\}$}{
      $\hat{S} \gets \mathsf{XContTaint}(\hat{S}, I, I.\hat{\ell})$\nllabel{line:xcont}\tcp*{Target's savelist}
    }
    \If{$\mathsf{Sink}(I) \land \mathsf{Unchecked}(\hat{S}, I)$}{
      $\mathcal{T} \gets \mathcal{T} \cup \{(I, \hat{S})\}$\nllabel{line:sink}\;
    }
  }
  \ForAll{successor $B'$ of $B$}{
    \uIf{$\mathsf{BackEdge}(B, B') \land \mathit{loops}(B') > \textsc{MaxUnroll}$}{
      $\mathcal{T} \gets \mathcal{T} \cup \mathsf{LoopSummary}(B', \hat{S})$\nllabel{line:loopsummary}\;
    }
    \Else{
      $\mathcal{W}.\mathrm{enqueue}(B', \hat{S},
        \mathrm{priority}(B'))$\;
    }
  }
}
\Return{$\mathcal{T}$}
\end{algorithm}

Before processing a block, the analysis fingerprints the taint state and
skips blocks already visited with the same fingerprint
(Lines~\ref{line:fp}--\ref{line:visited}).
The fingerprint hashes the bounded stack slice, the $\mathit{overflow}$ bit,
and each tracked taint tuple
$(\mathit{src}, \mathit{tainted}, \mathit{checked}, \mathit{origins})$,
preserving checked-state and provenance distinctions.
Before pruning a repeated state, the implementation confirms structural
equality; no fingerprint collision was observed in our evaluation.
Each instruction is processed by its transfer function
$\mathsf{TaintStep}$ (Line~\ref{line:transfer}). At call/jump sites, the
cross-continuation taint bridge (\cref{sec:cross-taint}) propagates taint
through the target's savelist summary (Line~\ref{line:xcont}).
A finding is recorded when a sink operand is tainted and unchecked
(Line~\ref{line:sink}).
When a successor edge leads back to an already-visited block (a back edge,
indicating a loop) and the unroll bound \textsc{MaxUnroll} is exceeded, a
conservative loop summary reports any tainted values reaching sinks within
the loop body without further unrolling (Line~\ref{line:loopsummary}).
The priority function favors blocks containing sensitive operations and
those whose stack carries unguarded tainted values, directing exploration
toward the most vulnerability-prone paths first.

\textit{Termination.}
The analysis terminates because the state space is finite. Each stack/register
slot ranges over a finite lattice, and only the top $K_s$ stack entries
are tracked. Combined with the finite number of basic blocks and the loop bound
\textsc{MaxUnroll}, the visited set $\mathcal{V}$ is finite and the worklist
eventually empties.

\subsection{Security Detectors}\label{sec:detectors}

\tool implements detectors for 5 of the 8 defect classes defined by
TONScanner~\cite{song2025tonscanner}, operating on \tasir semantic labels
and savelist-enhanced taint facts.
Each detector is described by a
\emph{vulnerability definition} followed by its \emph{detection rule}.
The rules are written in terms of \tasir-level and helper predicates rather
than concrete opcode patterns or other low-level matching heuristics.

\textbf{V1 --- Unhandled Bounced Message.}
\emph{Definition.}
In \ton's asynchronous messaging, a bounced message re-enters
\texttt{recv\_internal}. If the contract does not check the bounced flag,
it processes the bounce as a legitimate message, causing permanent fund loss.

\noindent\emph{Rule.}
{
The detector searches an entry prefix of \texttt{recv\_internal} for
bounced-flag extraction sites that are followed by a guard. The
implementation recognizes several equivalent bit-extraction idioms. Let
$\mathit{Prefix}(f)$ denote the scanned entry prefix of function~$f$,
$\mathit{Bounce}(i)$ hold when instruction~$i$ extracts the bounced flag,
and $G(i)$ hold when instruction~$i$ is followed by a guarding conditional
branch or throw. Formally:
\[
  \not\exists\, i \in \mathit{Prefix}(\texttt{recv\_internal}):\;
  \mathit{Bounce}(i) \land G(i)
\]
}

\textbf{V2 --- Bad Randomness.}
\emph{Definition.}
Predictable blockchain values (\texttt{NOW}, \texttt{BLOCKLT},
\texttt{LTIME}) used as random seeds allow miners or validators to
predict outcomes, enabling front-running or fund manipulation.

\noindent\emph{Rule.}
{
The detector collects randomness-related instructions and
security-sensitive sinks within each function. A finding is reported when a
randomness-to-sensitive pair is found in program order, using this ordering
as a heuristic proxy that predictable values may affect a critical action.
Let $\mathit{Rand}(f)$ and $\mathit{Sens}(f)$ denote the randomness-related
sites and security-sensitive sinks in function~$f$, respectively, and let
$\mathit{Before}(i,j)$ denote program order within a function. Formally:
\[
  \exists\, f \in F,\; r \in \mathit{Rand}(f),\; s \in \mathit{Sens}(f):\;
  \mathit{Before}(r,s)
\]
}

\textbf{V3 --- Lack of end\_parse.}
\emph{Definition.}
A \texttt{Slice} created from a \texttt{Cell} via \texttt{CTOS} is parsed
but never validated with \texttt{ENDS}, potentially accepting malformed
data with hidden payloads.

\noindent\emph{Rule.}
{
For each function, the detector identifies slices that are created from
cells and subsequently read, and compares them against \texttt{ENDS}
validation sites. A finding is reported when at least one parsed slice lacks
a corresponding end-of-slice validation. Let $\mathit{Parsed}(f)$ denote the
parsed slices in function~$f$, and let $\mathit{Ended}(p,f)$ hold when parsed
slice~$p$ is matched by an \texttt{ENDS} validation in function~$f$. Formally:
\[
  \exists\, f \in F,\; p \in \mathit{Parsed}(f):\;
  \neg \mathit{Ended}(p,f)
\]
}

\textbf{V4 --- Precision Loss.}
\emph{Definition.}
An integer division followed by multiplication silently truncates the
intermediate quotient; the fused \texttt{MULDIV} opcode avoids this by
computing the full-width product before dividing.

\noindent\emph{Rule.}
{
The detector looks for a nearby multiplication that follows a division. If
the division site is not a fused \texttt{MULDIV}-family instruction, written
as $\neg \mathit{Fused}(d)$, the pair is flagged as potential precision
loss. Let $\mathit{Div}(f)$ and $\mathit{Mul}(f)$ denote the division and
multiplication sites in function~$f$, and let $\mathit{Near}(i,j)$ hold when
the two sites are within the detector's short matching range. Formally:
\[
  \exists\, f \in F,\; d \in \mathit{Div}(f),\; m \in \mathit{Mul}(f):\;
  \begin{aligned}[t]
    &\mathit{Before}(d,m) \land \mathit{Near}(d,m) \\
    &\land \neg \mathit{Fused}(d)
  \end{aligned}
\]
}

\textbf{V5 --- Improper Modifier.}
\emph{Definition.}
A function lacking the \texttt{impure} modifier that directly or
transitively throws, sends messages, modifies storage, or writes globals
is side-effecting. A call to such a function whose return value is unused
indicates a missing \texttt{impure} annotation.

\noindent\emph{Rule.}
{
Let $\mathit{Call}(f)$ be the call sites in function~$f$,
$\mathit{target}(i)$ the resolved callee of call~$i$,
$\mathit{NoImpure}(g)$ hold when $g$ lacks \texttt{impure},
$\mathit{SE}(g)$ when $g$ is side-effecting, and
$\mathit{DropRet}(i,g)$ when the return value of~$i$ is unused or $g$
returns no value. Formally:
\[
  \exists\, f \in F,\; i \in \mathit{Call}(f),\; g{=}\mathit{target}(i):\;
  \begin{aligned}[t]
    &\mathit{NoImpure}(g) \land \mathit{SE}(g) \\
    &\land\; \mathit{DropRet}(i,\, g)
  \end{aligned}
\]
}


\section{Evaluation}\label{sec:evaluation}

We evaluate \tool with five research questions:

\begin{itemize}
  \item \textbf{RQ1:} How accurate is continuation-target resolution?
  \item \textbf{RQ2:} How does \tool compare with \tsa{} under identical runtime budgets?
  \item \textbf{RQ3:} How effective is \tool on five defect classes?
  \item \textbf{RQ4:} How much does path sensitivity contribute to detection?
  \item \textbf{RQ5:} What is the defect distribution and precision on Registry?
\end{itemize}

\subsection{Experimental Setup}\label{sec:setup}

\textbf{Implementation and Environment.}
\tool is implemented in approximately 17{,}000 lines of Python code
(excluding comments, blank lines, and docstrings), organized into five
modules mirroring the pipeline (\cref{fig:arch}).
BOC deserialization uses \texttt{pytoniq-core}~\cite{pytoniqcore2024}, and
the opcode database is derived from the official \tvm
specification~\cite{tvmspec2023}; all other components are developed from
scratch.
The loop-unrolling bound \textsc{MaxUnroll} is set to~3, and the taint
stack-tracking depth $K_s$ is~64. All experiments run on Apple M4 Max with
16 cores, 128\,GB RAM, and macOS~26.3.
TONScanner is distributed as a pre-built Linux x86-64 ELF binary without
source code; the RQ5 comparison therefore runs inside a Docker container
(Ubuntu~20.04, x86-64 emulation via Rosetta~2).

\textbf{Datasets.}
Two datasets are used: \textbf{Registry}, a
set of 2{,}921 unique \ton contracts collected by us from the \ton verifier
registry~\cite{tonverifier2026} (mainnet and testnet, February 2026), for RQ1,
RQ2, and RQ5. Registry is the deduplicated full snapshot; we apply no
outcome-based filtering. \textbf{Benchmark} contains
208 contracts with published manual labels from
TONScanner~\cite{song2025tonscanner}, for RQ3 and RQ4.
RQ3--RQ5 metrics are computed at the \emph{contract--class pair} level.
The Benchmark contains 190 pairs labeled as vulnerable and
6 pairs labeled as non-vulnerable by TONScanner across the
five evaluated classes.

For baselines, RQ2 compares \tool with \tsa{}~\cite{esprito2024tsa}; RQ3 and
RQ5 compare it
with TONScanner~\cite{song2025tonscanner}. RQ4 evaluates one
component ablation, \tool-NoPS (path sensitivity disabled). All tools use
bytecode-only BOC input and identical hardware. Per-contract timeouts are
1{,}000\,s for RQ2 and 30\,s for RQ4. \tsa{} is run with its default settings.

\subsection{RQ1: Continuation Resolution Completeness and Accuracy}\label{sec:rq1}

\begin{table}[t]
  \caption{Resolution results. All tiers have 100\% target precision and
  exact-set agreement; exact means no missing or spurious oracle targets.}
  \label{tab:rq1}
  \centering
  \begin{tabularx}{\columnwidth}{@{}Xrrl@{}}
    \toprule
    \textbf{Edge Tier} & \textbf{Edges} & \textbf{Ratio}
    & \textbf{Oracle} \\
    \midrule
    Statically determined
      & 284{,}066 & 96.4\% & Structural \\
    Solver-resolved
      & 9{,}452 & 3.2\% & Dictionary \\
    Savelist-dependent
      & 1{,}028 & 0.4\% & Source audit \\
    \midrule
    \textbf{Total}
      & \textbf{294{,}546} & \textbf{100\%} & \\
    \bottomrule
  \end{tabularx}
\end{table}

\begin{table}[t]
  \caption{Cumulative ablation of continuation-resolution techniques.}
  \label{tab:rq1-ablation}
  \centering
  \begin{tabularx}{\columnwidth}{@{}Xrr@{}}
    \toprule
    \textbf{Configuration}
    & \textbf{Unresolved} & \textbf{Resolved} \\
    \midrule
    Baseline (intracont.\ sim.\ + known-prefix)
      & 11{,}081 & --- \\
    \quad + intercont.\ propagation
      & 10{,}612 & 469 \\
    \quad + reachability pruning
      & 6{,}980  & 3{,}632 \\
    \quad + \savelist propagation
      & \textbf{5{,}952}  & \textbf{1{,}028} \\
    \bottomrule
  \end{tabularx}
\end{table}

RQ1 evaluates target resolution along two dimensions. \emph{Accuracy} asks
whether resolved targets are correct; \emph{completeness} asks how many edges
each technique resolves.
Since all 2{,}921 contracts originate from the \ton verifier registry with
verified source code, we construct a deterministic verification oracle for each
resolved edge.

\noindent\textbf{Oracle construction.}
We assign each dynamic edge to one of three tiers based on its resolution
mechanism (\cref{tab:rq1}).
\textbf{Statically determined} edges (284{,}066, 96.4\%) are conditional
branches such as \texttt{IFELSE}, \texttt{IF}, \texttt{IFNOT}, and
\texttt{IFJMP}
whose targets are the \texttt{PUSHCONT} operands immediately preceding
the branch instruction.
Ground truth is extracted structurally from the bytecode: the target
continuation identifiers are uniquely determined by instruction position,
making verification fully automatic and provably correct.
\textbf{Solver-resolved} edges (9{,}452, 3.2\%) are dictionary-dispatch
edges (\texttt{CALLDICT}, \texttt{DICTI\-GET\-JMPZ}) resolved by reading
compile-time dictionary content from~\texttt{c3}.
Ground truth is the set of dictionary entries; we cross-verify these
against source-level function definitions, confirming that the method
identifiers match the declared functions.
\textbf{Savelist-dependent} edges (1{,}028, 0.4\%) are indirect register
calls (\texttt{EXECUTE}/ \texttt{JMPX} via \texttt{PUSHCTR}) whose
targets depend on savelist propagation.
These edges are verified by source-code audit of the 129 affected
contracts: for each edge, we trace the register-flow chain through the
source to confirm that the resolved target matches the call site
semantics. Of the 1{,}028 edges, 1{,}021 (99.3\%) follow the
\mbox{\texttt{PUSHCTR\,c3}} $\to$ \texttt{EXECUTE} pattern; the
remaining 7 edges across 4~contracts are resolved via alternative
solver strategies (\eg, dictionary dispatch). All 1{,}028 edges are
confirmed correct.

\noindent\textbf{Accuracy result.}
For each edge we compare \tool's resolved target set against the
oracle-verified ground truth.
\Cref{tab:rq1} shows that all 294{,}546 edges achieve
\textbf{100\% precision}: the resolved targets match the ground truth
exactly on every edge, verified per-tier as described above.
Per-contract audit reports for the savelist tier are provided in the
artifact package.

\noindent\textbf{Completeness result.}
\Cref{tab:rq1-ablation} shows the cumulative contribution of each
resolution technique, measured by the number of unresolved indirect
call targets (i.e., \texttt{EXECUTE}/\texttt{JMPX} edges with
unknown destinations).
The baseline leaves 11{,}081 targets unresolved.
Intercontinuation propagation resolves 469 targets by propagating
entry-shape information across continuation boundaries.
Reachability pruning eliminates 3{,}632 targets that originate from
unreachable code---dead continuations extracted from data-cell
references (\eg, \texttt{PUSHREFSLICE}) rather than executable
control flow.
Finally, \savelist propagation resolves 1{,}028 targets across
129 contracts (4.4\% of the corpus) by tracking the
\texttt{PUSHCTR\,c0/c3}$\to$\texttt{EXECUTE} pattern at the IR
level.
After all four techniques, 5{,}952 indirect calls remain
unresolved---these arise from continuations passed through complex
stack manipulations where the analysis cannot trace the target
origin---and are conservatively left as unknown in the call graph.
All 294{,}546 resolved targets achieve \textbf{100\% precision}
(\cref{tab:rq1}), i.e., every resolved target is verified correct.


\subsection{RQ2: Budgeted Exploration vs \tsa{}}\label{sec:rq2}

\begin{table}[t]
  \caption{Budgeted exploration on Registry; times are p50/p95.}
  \label{tab:rq2-path}
  \centering
  \begin{tabularx}{\columnwidth}{@{}lXr@{}}
    \toprule
    \textbf{Tool} & \textbf{Outcomes} & \textbf{Time (s)} \\
    \midrule
    \tool & 2{,}921 success; 0 timeout; 0 crash & 0.24/5.52 \\
    \tsa{} & 2{,}834 success; 12 timeout; 75 crash & 4.10/64.90 \\
    \bottomrule
  \end{tabularx}
\end{table}

RQ2 evaluates whether \tool can analyze the entire Registry reliably and
how its throughput compares with \tsa{}.

\noindent\textbf{Result.}
\tool achieves a \textbf{100\% success rate} with zero crashes and zero
timeouts, while \tsa{} succeeds on 97.0\% of contracts
with 12 timeouts and 75 crashes (\cref{tab:rq2-path}).
\tsa{}'s 1{,}000\,s limit is its default budget; \tool's maximum observed
runtime is 56.6\,s, so this budget does not truncate its runs.
\tool's median analysis time (0.24\,s) is \textbf{17$\times$ faster} than
\tsa{}'s (4.10\,s), and the 95th-percentile gap widens to 12$\times$
(5.52\,s vs.\ 64.90\,s).
This speed advantage stems from \tool's dataflow-analysis approach,
which avoids the path explosion inherent in \tsa{}'s symbolic execution.

\noindent\textbf{Code coverage.}
Beyond throughput, the two tools differ fundamentally in analysis
scope.
\tool computes one whole-CFG dataflow fixed point rather than making one
literal pass: instructions are revisited until states stabilize, with loop
edges bounded to three unrollings. It reached every reachable instruction on
all 2{,}921 contracts within the 1{,}000\,s budget.
\tsa{}, as a symbolic executor, explores concrete paths
one at a time; its internal coverage tracker
reports \emph{transitive instruction coverage}---the fraction of
reachable instructions visited across all explored paths---which
is inherently bounded by the exploration budget and subject to path
explosion.
On the 87 contracts (3.0\%) where \tsa{} crashes or times out,
it produces no analysis results at all, whereas \tool completes
with full coverage.
This difference has a direct downstream impact: a defect detector
can only flag patterns in code it has analyzed, so \tool's
complete coverage eliminates an entire class of false negatives
that arise from incomplete exploration.

\subsection{RQ3: Detection Effectiveness}\label{sec:rq3}
\begin{table}[t]
  \caption{Per-class detection rate on Benchmark.}
  \label{tab:rq3-detection}
  \centering
  \begin{tabular}{lrrr}
    \toprule
    \textbf{Defect Class} & \textbf{Labeled}
    & \textbf{Detected} & \textbf{Det.\ Rate} \\
    \midrule
    Bad Randomness           &  2 &  2 & 100.0\% \\
    Improper Modifier        &  6 &  6 & 100.0\% \\
    Unhandled Bounced Msg    & 56 & 56 & 100.0\% \\
    Lack of end\_parse       & 82 & 82 & 100.0\% \\
    Precision Loss           & 44 & 35 &  79.5\% \\
    \midrule
    \textbf{Overall}         & \textbf{190} & \textbf{181} & \textbf{95.3\%} \\
    \bottomrule
  \end{tabular}
\end{table}

RQ3 evaluates \tool on five defect classes from \textbf{Benchmark}.
Three classes are excluded: Global Variable Redefined and Inconsistent Data
require source-level analysis (variable re-declaration and per-function
cell-layout comparison), and Unchecked Return requires source-level
variable-name tracking---none of which can be reliably performed at the
bytecode level.
For each class, we measure \emph{detection rate}: the fraction of contracts
labeled as vulnerable that \tool also flags.
\tsa{} does not provide high-level defect-class labels; its output reports
low-level execution errors (\eg, \texttt{cell-underflow}) rather than
vulnerability types, so it is not included in this comparison.

\noindent\textbf{Results.}
\tool detects 181 of 190 vulnerable contract--class pairs (95.3\%); all
cases in four classes are detected, and the nine misses are Precision Loss
(\cref{tab:rq3-detection}). It also reports findings for all six pairs labeled
non-vulnerable, which are the false positives analyzed below.
\Cref{tab:rq3-precision} shows 96.8\% overall precision, close to
TONScanner's 96.9\%, without requiring source code.

\begin{table}[t]
  \caption{Contract-level precision comparison.}
  \label{tab:rq3-precision}
  \centering
  \begin{tabularx}{\columnwidth}{@{}l*{4}{>{\centering\arraybackslash}X}@{}}
    \toprule
    & \multicolumn{2}{c}{\textbf{\tool}} & \multicolumn{2}{c}{\textbf{TONScanner}} \\
    \cmidrule(lr){2-3} \cmidrule(lr){4-5}
    \textbf{Language} & \textbf{TP/FP} & \textbf{Precision} & \textbf{TP/FP} & \textbf{Precision} \\
    \midrule
    FunC  & 138\,/\,6 & 95.8\% & 144\,/\,6 & 96.0\% \\
    Tact  & 43\,/\,0  & 100\%  & 46\,/\,0  & 100\%  \\
    \midrule
    \textbf{Combined} & \textbf{181\,/\,6} & \textbf{96.8\%}
                       & \textbf{190\,/\,6} & \textbf{96.9\%} \\
    \bottomrule
  \end{tabularx}
\end{table}

\noindent\textbf{False-positive analysis.}
Six contract--class pairs are over-reported: 4 from Unhandled Bounced Message
and 2 from Lack of end\_parse. The former handle the bounced flag beyond the
detector's entry-prefix window; the latter use parsing patterns not captured
by its opcode heuristic.

\noindent\textbf{False-negative analysis.}
All nine misses are Precision Loss. Recompilation and rescanning confirm that
chained arithmetic separates the vulnerable division and later multiplication
beyond the 12-instruction matching window. This is a rule limitation rather
than a bytecode-availability issue; tracking division results by taint would
remove the fixed-window constraint.

\subsection{RQ4: Path-Sensitivity Ablation}\label{sec:rq4}

Under a 30\,s timeout, the full configuration detects 92.6\% (five contracts
time out), whereas path-insensitive \tool-NoPS detects 91.1\% (four time out).
The remaining 1.5-point gap reflects precision lost by merging path-disjoint
states. Separately, \savelist propagation resolves 1{,}028 additional targets
across 129 contracts (\cref{sec:rq1}). Disabling it does not change the current
detection totals because the five detectors ultimately match local opcode
patterns; its detection-level contribution is therefore a zero-difference
ablation, while its measurable contribution is CFG completeness and support
for future cross-continuation detectors.

\subsection{RQ5: Full-Scale Defect Analysis}\label{sec:rq5}

RQ5 applies \tool to the full Registry to measure defect prevalence and
estimate per-class precision via stratified sampling.

\noindent\textbf{Defect Distribution.}
\tool flags defects in 2{,}806 of 2{,}921 Registry contracts (96.1\%).
Lack of end\_parse is most prevalent (2{,}795; 95.7\%), followed by Unhandled
Bounced Message (1{,}790; 61.3\%); \cref{tab:rq5-findings} gives all classes.

\begin{table}[t]
  \caption{Per-class findings and precision on Registry.}
  \label{tab:rq5-findings}
  \centering
  \begin{tabularx}{\columnwidth}{@{}Xrrr@{}}
    \toprule
    \textbf{Defect Class} & \textbf{Affected (\%)} &
    \textbf{Audit TP/FP} & \textbf{Precision} \\
    \midrule
    Lack of end\_parse     & 2{,}795 (95.7) & 87/6  & 93.5\% \\
    Unhandled Bounce       & 1{,}790 (61.3) & 77/15 & 83.7\% \\
    Bad Randomness         & 304 (10.4)      & 68/6  & 91.9\% \\
    Improper Modifier      & 54 (1.8)        & 29/6  & 82.9\% \\
    Precision Loss         & 854 (29.2)      & 53/19 & 73.6\% \\
    \midrule
    \textbf{Overall}       & \textbf{2{,}806 (96.1)}
    & \textbf{314/52} & \textbf{85.8\%} \\
    \bottomrule
  \end{tabularx}
\end{table}

\noindent\textbf{Precision Sampling.}
We audit a stratified sample (95\% confidence, 10\% margin) against verified
source via independent review by two authors (\cref{tab:rq5-findings}).
Lack of end\_parse is reported under the TONScanner convention that
treats Tact-compiler-omitted \texttt{ENDS} as a true positive (compiler
deficiency), yielding 93.5\%. Unhandled Bounced Message reaches 83.7\%
after review of bounced handlers and send paths. Improper Modifier reaches
82.9\%; its remaining false positives are context-dependent side effects.
Precision Loss is least precise (73.6\%) because its short-window rule lacks
richer value-flow reasoning.

\noindent\textbf{Comparison with TONScanner on Registry.}
We attempt TONScanner on 343 Registry contracts (366 contract--class pairs).
Only 280 are analyzable: 147 are FunC and 133 Tact contracts can be recompiled;
50 Tact projects fail to compile and 13 inputs are non-FunC. Its parser also
rejects some generated FunC, further limiting coverage.
Across the 366 sampled contract--class pairs, TONScanner flags 32 findings
with 84.4\% precision (27~TP, 5~FP), compared with \tool's 85.8\%
precision (314~TP, 52~FP) on the same sample. Thus, the main limitation is
source-language coverage rather than flagged-finding precision.


\section{Related Work}\label{sec:related}

\subsection{Smart Contract Analysis}

Smart contract analysis is well explored for the
\evm~\cite{feist2019slither,tsankov2018securify,mueller2018mythril,luu2016making,brent2020ethainter,bose2022sailfish,jiang2018contractfuzzer,nguyen2020sfuzz,sun2024gptscan},
including formal
semantics~\cite{hildenbrandt2018kevm,schneidewind2020ethor} and
vulnerability taxonomies~\cite{chen2022defining,atzei2017survey}.
For CFG construction from \evm bytecode,
EtherSolve~\cite{contro2021ethersolve} and
EVMLiSA~\cite{arceri2024towards,arceri2025evmlisa} use stack simulation
to resolve \evm jumps. \tool adapts
these stack-simulation ideas to \tvm's continuation model and
extends them with \savelist abstraction, which has no analogue in
\evm.
Beyond Ethereum, WASAI~\cite{chen2022wasai},
VETEOS~\cite{li2024veteos}, and MoveScan~\cite{li2024movescan}
target other blockchains; none handle first-class continuations or
\savelist semantics.

Within \ton, TONScanner~\cite{song2025tonscanner} defines eight
defect classes and operates on the FunC compiler IR, providing
detailed source-level analysis but requiring access to source code;
as shown in our evaluation (\cref{sec:rq5}), its FunC parser rejects
a substantial fraction of our Registry sample due to language-version
incompatibilities.
Misti~\cite{nowarp2024misti} similarly targets \textit{Tact} source
and is limited to contracts written in that language.
\tsa{}~\cite{esprito2024tsa} operates at the bytecode level via
symbolic execution, modeling \savelist operations per
path, but does not build a static summary of
\savelist effects across all paths---each path maintains a concrete
\savelist structure with no join or widening.
Yanovich \etal~\cite{yanovich2025auditrift} derive an audit checklist
from professional \ton audit reports, providing a complementary
knowledge-driven perspective;
BugMagnifier~\cite{yanovich2025bugmagnifier} complements static
analysis with dynamic transaction simulation for runtime validation.
\tool is positioned as a source-free bytecode analyzer with a
dedicated \savelist model, removing the source-code
requirement of TONScanner/Misti, replacing per-path execution
of \tsa{} with a static dataflow model, and handling first-class
continuation semantics natively.

\subsection{Continuation Semantics}

Continuations have a long history in programming-language
theory~\cite{appel1992compiling,felleisen1988theory}, including
continuation-sensitive CFA~\cite{vanhorn2010aam,vardoulakis2011cfa2}
and pushdown
analyses~\cite{vardoulakis2011pushdown,gilray2016p4f} for first-class
control. These techniques target languages where continuations capture
lexical environments; the analysis challenge is representing the
unbounded set of captured bindings. \tvm continuations differ
fundamentally: they carry explicit register snapshots (\savelist{}s)
rather than captured environments, and invocation restores those
registers automatically. The analysis challenge is therefore to
statically approximate which registers are saved at each call site and
what values they hold---a problem closer to interprocedural register
analysis than to environment abstraction. The dedicated savelist
model in \cref{sec:savelist} addresses this challenge. To our
knowledge, \tool provides the first static over-approximation of \tvm
\savelist behavior.

Unlike fixed interprocedural call/return pairs, a first-class continuation
targets the value held in a register and restores an explicit savelist rather
than an implicit frame; analysis must recover both.


\section{Discussion}\label{sec:discussion}

\paragraph{Limitations}
For \textit{bounded analysis}, the soundness guarantee of
\cref{thm:savelist-sound} applies to the \savelist model rather than
the full analysis chain, so the analysis may miss vulnerabilities due to
bounded loop unrolling, path limits, and 5{,}952 indirect calls whose
continuation targets remain unresolved after all resolution techniques
(\cref{sec:rq1}). Such calls remain unknown edges and preserve conservative
taint at the transfer, but their destinations cannot be enumerated; they do not
affect the current local detectors but may hide paths from future
cross-continuation detectors.
The taint-state cache uses structural equality after hashing, preventing hash
collisions from merging unequal states; none occurred in our evaluation.
For \textit{coverage scope}, \tool targets 5 of 8 TONScanner defect classes;
the remaining three require source-level semantics erased during compilation.

\paragraph{Generality and Future Work}
The savelist abstraction is \tvm-specific, but data-driven target resolution
and \tasir's typed CFG generalize to other indirect-jump bytecodes; \evm itself
has no savelists. IFDS would neither remove our bounded operand-stack domain nor
support the non-distributive, path-specific guard state. Future work will add
savelist-aware detectors and solver-assisted indirect-call resolution.

\paragraph{Threats to Validity}
For \textit{internal} validity, ground-truth labels are cross-validated against
TONScanner's published annotations; the RQ1 oracle is exact by construction for
96.4\% of edges and empirically verified for the remainder.
For \textit{external} validity, our dataset contains open-source \ton contracts
only and may not cover all deployment settings.
For \textit{construct} validity, detection-rate and precision estimates are
bounded by TONScanner's label space; manual auditing on a stratified sample
provides cross-validation but cannot fully eliminate labeling bias.


\section{Conclusion}\label{sec:conclusion}

This paper presented \tool, a source-free static analysis framework for \tvm
bytecode whose key contribution is a static dataflow model for \savelist
semantics, enabling taint propagation across continuation boundaries.
Compared with \tsa{}, the only prior bytecode-level analyzer for \tvm,
\tool analyzes all contracts without crashes or timeouts, runs an order of
magnitude faster, and covers every reachable instruction by construction.
Evaluation on 2{,}921 real-world contracts confirms high detection rates
and practical precision across five defect classes, matching the
source-level TONScanner without requiring access to source code.


\section*{Data Availability Statement}

Our replication package is available online at
\url{https://github.com/yxsec/TasmScan_artifact}.

\begin{acks}
This work was supported by the Singapore Ministry of Education Academic
Research Fund Tier 2 (T2EP20224-0003) and Tier 1 (RG12/23), and the Nanyang
Technological University Centre for Computational Technologies in Finance
(NTU-CCTF). The views expressed are those of the authors and not necessarily
those of MOE or NTU-CCTF.
\end{acks}

\clearpage
\balance
\bibliographystyle{ACM-Reference-Format}
\bibliography{references}

\end{document}